\algrenewcommand\algorithmicrequire{\textbf{Input:}}
\newcommand*\diff{\mathop{}\!\mathrm{d}}
\newtheorem{theorem}{Theorem}[section]
\newtheorem{lemma}[theorem]{Lemma}
\newtheorem{definition}[theorem]{Definition}
\newtheorem{corollary}[theorem]{Corollary}
\newtheorem{claim}[theorem]{Claim}
\newcommand{\E}{\mathsf{E}}
\title{The wrong direction of Jensen's inequality is algorithmically right}
\author{Or Zamir\\ \emph{Princeton University}}
\date{}
\begin{document}

\maketitle

\begin{abstract}
Let~$\mathcal{A}$ be an algorithm with expected running time~$e^X$, conditioned on the value of some random variable~$X$.
We construct an algorithm~$\mathcal{A'}$ with expected running time~$O\left(e^{\E[X]}\right)$, that fully executes~$\mathcal{A}$.
In particular, an algorithm whose running time is a random variable~$T$ can be converted to one with expected running time~$O\left(e^{\E[\ln T]}\right)$, which is never worse than~$O(\E[T])$.
No information about the distribution of~$X$ is required for the construction of~$\mathcal{A}'$.
\end{abstract}

\section{Introduction}

Let~$\mathcal{A}$ be a \emph{Las Vegas}\footnote{A randomized algorithm is called \emph{Las Vegas} if it always returns the correct answer, but its running time is a random variable.} randomized algorithm.
Assume that conditioned on the value of some random variable~$X$, the expected running time of~$\mathcal{A}$ is~$e^X$.
By Jensen's inequality,~$\E[e^X]\geq e^{\E[X]}$, and in fact~$\mathcal{A}$'s expected running time might be much larger than $e^{\E[X]}$: Consider for example~$X$ that gets the value~$\frac{1}{p}\E[X]$ with probability~$p$ and~$0$ otherwise, for any choice of~$p>0$;
While the expectation of~$X$ is always~$\E[X]$, the expectation of~$e^X$ is~$p\cdot e^{\frac{1}{p}\E[X]}$ which can be arbitrarily large.
We show that, surprisingly, any such~$\mathcal{A}$ can be converted to a different Las-Vegas randomized algorithm~$\mathcal{A'}$ that gives the same answer yet runs in expected time~$O\left(e^{\E[X]}\right)$.
Transforming~$\mathcal{A}$ to~$\mathcal{A'}$ does not require any assumption or knowledge about the distributions of~$X$.

\begin{theorem}\label{mainthm}
There exists an algorithm~$T$ that receives as an input a randomized \emph{Las Vegas} algorithm~$\mathcal{A}$, and fully executes it.
If the expected running time of~$\mathcal{A}$ is~$e^X$ when conditioned on the value of some random variable~$X$, then the expected running time of~$T(A)$ is~$O\left(e^{\E[X]}\right)$.
\end{theorem}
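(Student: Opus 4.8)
\emph{Construction.} The plan is to let the algorithm $T$ be the following \emph{geometric restart} scheme. Fix a base $b$ with $1 < b < e$ (e.g.\ $b = 2$). In round $k = 1,2,3,\dots$, start a fresh, independent execution of $\mathcal{A}$ and run it for at most $b^{k}$ steps; if it terminates within this budget, halt $T$ and return its output, otherwise proceed to round $k+1$. Since the running time of each single execution of $\mathcal{A}$ is almost surely finite (its conditional expectation $e^{X}$ is finite) and the budgets $b^{k}$ tend to infinity, $T$ halts almost surely; and because it returns the output of a genuine, completed run of $\mathcal{A}$, it is Las Vegas and correct, and it clearly uses nothing about the law of $X$.

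\emph{Running time: setup.} Write $\mu = \E[X]$; note $\mu \ge 0$ because a run of $\mathcal{A}$ takes at least one step, so $e^{X} \ge 1$. If $T$ stops in round $i^{\star}$, the total work is at most $\sum_{k \le i^{\star}} b^{k}$ (the first $i^{\star}-1$ rounds each cost $b^{k}$, the last costs at most $b^{i^{\star}}$), so by independence of the rounds
\[
\E[\text{time of }T] \;\le\; \sum_{k \ge 1} b^{k}\,\Pr[\text{rounds }1,\dots,k-1\text{ all fail}] \;=\; \sum_{k \ge 1} b^{k}\prod_{j < k}\Pr[\text{round }j\text{ fails}].
\]
To control a single factor, fix any constant $\lambda > 1$ and condition on the value $X_j$ realized in round $j$: by Markov's inequality the round fails with conditional probability at most $e^{X_j}/b^{j}$, which is at most $1/\lambda$ on the event $\{e^{X_j} \le b^{j}/\lambda\}$. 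Hence $\Pr[\text{round }j\text{ fails}] \le 1 - (1 - \tfrac1\lambda)\,\Pr[X \le j\ln b - \ln\lambda]$, and using $1 - x \le e^{-x}$ the product over $j < k$ is at most $\exp\!\big(-(1-\tfrac1\lambda)\sum_{j<k}\Pr[X \le j\ln b - \ln\lambda]\big)$.

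\emph{Running time: the key estimate and conclusion.} The heart of the argument is a distribution-free lower bound on $\sum_{j<k}\Pr[X \le j\ln b - \ln\lambda]$. Writing each term as $1 - \Pr[X > j\ln b - \ln\lambda]$ and comparing $\sum_{j \ge 1}\Pr[X > j\ln b - \ln\lambda]$ with the integral $\int_{0}^{\infty}\Pr[X>t]\,dt = \mu$ (the tail function is monotone), this sum is at most $\mu/\ln b + O(1)$, where $O(1)$ depends only on $b,\lambda$. Therefore $\sum_{j<k}\Pr[X \le j\ln b - \ln\lambda] \ge k - K$ for $K := \mu/\ln b + O(1)$, so $\Pr[\text{rounds }1,\dots,k-1\text{ all fail}] \le \min\!\big(1,\ e^{-(1-1/\lambda)(k-K)}\big)$. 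Splitting the sum at $k = K$,
\[
\E[\text{time of }T] \;\le\; \sum_{k \le K} b^{k} \;+\; \sum_{k > K} b^{k} e^{-(1-1/\lambda)(k-K)} \;=\; O\!\big(b^{K}\big),
\]
where the geometric tail converges as long as $b\,e^{-(1-1/\lambda)} < 1$, i.e.\ $\ln b < 1 - \tfrac1\lambda$. Since $b < e$ we may take $\lambda$ large enough for this to hold (for $b = 2$, any $\lambda \ge 4$ works). Finally $b^{K} = b^{\mu/\ln b}\cdot b^{O(1)} = O(e^{\mu}) = O(e^{\E[X]})$, which is the claim.

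\emph{Where the difficulty lies.} The only real obstacle is obtaining the constant \emph{in the exponent} equal to $1$ rather than merely $1 + \varepsilon$: a careless version of this analysis, exponentiating the per-round tail bound for every $k$, gives only $O(e^{(1+\varepsilon)\mu})$. Two points are needed to avoid this. First, charging round-success only on $\{e^{X} \le b^{j}/\lambda\}$ makes the per-round failure probability a tunable constant below $1$, not just some fixed constant below $1$. Second, the first $K \approx \mu/\ln b$ rounds must be handled with the trivial bound $\Pr[\cdot] \le 1$, so that their cost is the plain geometric sum $\sum_{k \le K} b^{k} = O(b^{K}) = O(e^{\mu})$ — already exactly the target order — rather than an inflated exponential; this is precisely what forces the cutoff base to be strictly below $e$.
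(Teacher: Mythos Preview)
Your argument is correct and takes a genuinely different, more elementary route than the paper. The paper builds up to Theorem~\ref{mainthm} via an iterated-logarithm threshold scheme: for each tentative bound $E$ on $\E[X]$ it tries roughly $\lambda^\star(E)$ carefully placed truncation budgets $e^{E-\lambda^{(k)}(E)}$, proving (Lemma~2.7, by a recursive conditional-expectation argument) that one of them must succeed with good probability, and engineering the number-of-repeats and the thresholds so that the total work telescopes to $O(e^{E})$; finally it iterates over $E$. You instead use a single plain geometric restart schedule $b^{k}$ with $b<e$ and bound the product of failure probabilities in one stroke via the tail-integral identity $\E[X]=\int_{0}^{\infty}\Pr[X>t]\,dt$, which forces $\sum_{j<k}\Pr[X\le j\ln b-\ln\lambda]\ge k-\mu/\ln b-O_{b,\lambda}(1)$ for \emph{every} distribution of $X$. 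The two genuinely nontrivial points you isolate---making the per-round success probability a tunable constant via the Markov threshold $b^{j}/\lambda$, and handling the first $K\approx\mu/\ln b$ rounds with the trivial bound so that their cost is exactly the geometric sum $O(b^{K})=O(e^{\mu})$---are precisely what replaces the paper's $\lambda^\star$ machinery. Your proof is shorter and needs no iterated functions; the paper's approach, on the other hand, yields the structural Lemma~2.3 (existence of a single good threshold when the distribution is known) and Lemma~2.7 as byproducts, which may be of independent interest.
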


As a corollary, any algorithm whose running time is a random variable~$T$ can be converted to one with expected running time~$O\left(e^{\E[\ln T]}\right)$, which is never worse than~$O(\E[T])$.

Recently we used the following simple version of Theorem~\ref{mainthm} in a late revision of~\cite{zamir2021faster} to substantially simplify the analysis in the paper. 
The paper improves the running time of exact exponential-time algorithms for general Constraint Satisfaction Problems.
\begin{lemma}[from an up-to-date version of~\cite{zamir2021faster}]\label{simple}
Let~$\mathcal{A}$ be an algorithm with expected running time~$2^X$ conditioned on the value of a random variable~$X$.
There exists an algorithm~$\mathcal{A'}$ that fully executes~$\mathcal{A}$ and has an expected running time of~$O\left(2^{\E\left[X\right]}\cdot \E\left[X\right]\right)$.
Transforming~$\mathcal{A}$ to~$\mathcal{A}'$ requires knowing~$\E[X]$.
\end{lemma}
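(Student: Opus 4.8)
The plan is to run $\mathcal{A}$ repeatedly with a \emph{fixed} timeout of order $2^{\E[X]}$, restarting with fresh randomness (hence a fresh independent copy of $X$) whenever the timeout is exceeded, and to return the output of the first run that completes. Concretely, write $\mu := \E[X]$ and set $\tau := 2^{\lceil\mu\rceil+2}$; the algorithm $\mathcal{A}'$ repeatedly runs a fresh instance of $\mathcal{A}$ for at most $\tau$ steps, halts and outputs as soon as some instance finishes, and otherwise starts a new instance. Since $\mathcal{A}$ is Las Vegas, every completed run returns the correct answer, so $\mathcal{A}'$ fully executes $\mathcal{A}$; it then remains only to bound its expected running time.

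The key quantity is the success probability of a single attempt. First I would fix the value of $X$ for that attempt and apply Markov's inequality: $\Pr[\,T>\tau\mid X\,]\le \E[T\mid X]/\tau = 2^{X}/\tau \le 2^{X-\mu-2}$, which is at most $\tfrac12$ whenever $X\le\mu+1$. Integrating over $X$, a single attempt completes with probability at least $\tfrac12\Pr[X\le\mu+1]$. I would then lower-bound $\Pr[X\le\mu+1]$: since $\E[T\mid X]=2^{X}\ge 1$ forces $X\ge 0$, Markov gives $\Pr[X>\mu+1]\le \mu/(\mu+1)$, hence $\Pr[X\le\mu+1]\ge 1/(\mu+1)$. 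So each attempt succeeds, independently, with probability at least $\tfrac{1}{2(\mu+1)}$; the number of attempts is therefore dominated by a geometric random variable of mean at most $2(\mu+1)$, each attempt costs $O(\tau)$, and the total expected running time is $O(\mu\,\tau)=O\!\left(2^{\mu}\,\mu\right)$. (For $\mu<1$ one reads the factor as $\mu+1$, or equivalently replaces $\mu$ by $\max(\mu,1)$ throughout.)

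The obstacle this design exists to get around is that running $\mathcal{A}$ \emph{once} with a large timeout is hopeless: the relevant failure probability is governed by $\E[2^{X}]$, which — by the very example in the introduction — can be infinite, so no single timeout together with a Markov/union bound suffices. The resolution is precisely that a restart samples a fresh independent $X$, so a heavy upper tail of $X$ costs only additional restarts rather than additional time \emph{inside} a restart, and the number of restarts stays $O(\mu)$ because, although the mean constraint $\E[X]=\mu$ implies no concentration whatsoever, it does force $\Pr[X\le\mu+1]\ge 1/(\mu+1)$. The points that need care are: that the per-attempt completion events really are i.i.d.\ (each attempt is a genuinely fresh execution of $\mathcal{A}$); that ``run with a timeout, then restart'' still legitimately counts as fully executing a Las Vegas algorithm; and the trivial bookkeeping for non-integer or very small $\mu$. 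The $\Theta(\mu)$ overhead, traceable to the $1/(\mu+1)$ bound above, is also exactly what separates this lemma from Theorem~\ref{mainthm}, whose stronger construction must avoid both knowing $\mu$ and paying this extra factor.
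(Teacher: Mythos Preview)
Your proposal is correct and is essentially the same argument as the paper's: a fixed timeout of order $2^{\mu+1}$, Markov on the running time conditioned on $X\le\mu+1$ to get per-attempt success probability $\ge\tfrac12$, Markov on $X$ to get $\Pr[X\le\mu+1]\ge 1/(\mu+1)$, and a geometric bound on the number of attempts. The only differences are cosmetic (base $2$ versus base $e$, and your more explicit discussion of the nonnegativity of $X$, the independence of restarts, and the small-$\mu$ edge case).
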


In this paper we focus on Theorem~\ref{mainthm} itself, obtaining an optimal version of it.

We transform an algorithm~$\mathcal{A}$ by using a sequence of \emph{truncated evaluations}.
A \emph{truncated evaluation} of an algorithm~$\mathcal{A}$ for~$t$ steps is the process of running algorithm~$\mathcal{A}$ and aborting its run if it did not fully execute in the first~$t$ computational steps of its run.
Each of the algorithms we present is thus a sequence of values~$t_1,t_2,\ldots,t_i,\ldots$ which we use as thresholds for truncated evaluations of~$\mathcal{A}$.
We stop at the first time~$\mathcal{A}$ is fully executed.
These thresholds can be defined deterministically or be random variables. 
In the simpler algorithms we present, the thresholds depend on~$\E[X]$ or even on the entire distribution~$X$. For the proof of Theorem~\ref{mainthm} the thresholds are completely independent of~$X$ and~$\mathcal{A}$.

Truncated evaluations are frequently used in complexity theory (for example, see the proof of the time and space hierarchies in~\cite{arora2009computational}).
The first algorithmic use of such a sequence of truncated evaluations that we are aware of, is by Alt, Guibas, Mehlhorn, Karp and Wigderson \cite{alt1996method}.
They used it to convert \emph{Las Vegas} randomized algorithms to \emph{Monte Carlo} randomized algorithms, with success probability larger than what Markov's inequality gives. 
Luby, Sinclair and Zuckerman~\cite{luby1993optimal} then introduced a universal strategy for truncated evaluations. That is a sequence that is guaranteed to run in time~$O(s \log s)$ if there exists any sequence that runs in time~$O(s)$ for the same algorithm.
Our contribution thus is two-fold: first, we prove the existence of good strategies in terms of~$\E[X]$, and second, we show that these strategies can be explicitly constructed (i.e., without paying additional logarithmic factors).
Not paying additional factors guarantees, due to Jensen's inequality, that our transformed algorithm is never worse than the original algorithm.

A natural use for such theorems is the regime of exponential-time algorithms.
For example, consider the PPSZ algorithm for solving~$k$-SAT, its recent improvements, and generalizations for CSPs \cite{paturi2005improved} \cite{hertli20143} \cite{hansen2019faster} \cite{zamir2021faster}.
In these algorithms, a randomly chosen permutation determines the number of input variables that we need to \emph{guess} the values of. The expectation of this number of variables is then analyzed.
The success probability or running time is exponential in this number.
In the original PPSZ algorithm the analyzed quantity is the success probability and thus Jensen's inequality is applicable to bound this probability from below.
In other variations (including~\cite{zamir2021faster}), the analyzed quantity is the running time and then Jensen's inequality is no longer applicable and either a more complicated analysis or the statement of this paper is necessary.
Further discussion on possible applications and in particular possible implications for SAT algorithms appears in Section~\ref{sec:conclude}.

\subsection{Preliminaries}\label{prelim}
We use standard notation throughout the paper.
The notation $\ln x$ is used for the natural logarithm, and~$\log x$ is used for the base two logarithm.
\begin{definition}[Iterated functions]
Let~$f:\mathbb{R}\rightarrow\mathbb{R}$ be a function.
We define the iterated functions~$f^{(k)}:\mathbb{R}\rightarrow\mathbb{R}$ recursively as follows.
$f^{(0)}(x):=x$, and for any~$k>0$ we let~$f^{(k)}(x):=f\left(f^{(k-1)}\left(x\right)\right)$.
\end{definition}
\begin{definition}[Star functions]
Let~$f:\mathbb{R}\rightarrow\mathbb{R}$ be a function.
Assume~$f$ is strictly increasing and strictly shrinking\footnote{That is,~$f(x)<x$.} for all~$x\geq x_0$.
The star function of~$f$, defined with respect to~$x_0$ for every~$x\geq x_0$, is
$$
f^\star(x) = \min\{k \;|\; f^{(k)}(x)\leq x_0\}.
$$
\end{definition}

The (general) Tower function~$Tower_b\left(n,\; x\right) : \mathbb{N}\times\mathbb{R}\rightarrow\mathbb{R}$ is defined as~$f^{(n)}(x)$ where~$f(x)=b^x$.
The standard Tower function~$Tower:\mathbb{N}\rightarrow\mathbb{N}$ is defined as~$Tower(n)=Tower_2(n,1)$.
The discrete inverse of the Tower function is~$\log^\star$, defined with respect to~$x_0=2$. 
That is,~$\log^\star n$ is the smallest integer such that~$Tower(\log^\star n)\geq n$.

\section{Proof of Theorem~\ref{mainthm}}
We begin by presenting a simple proof of Lemma~\ref{simple}.

Let~$\mathcal{A}$ be an algorithm whose expected running time is~$e^X$ when we condition on the value of some non-negative random variable~$X$.
We observe, by Markov's inequality, that
$$
\Pr\left(X > \E[X]+1\right)\leq \frac{\E[X]}{\E[X]+1} = 1-\frac{1}{\E[X]+1}.
$$

Hence, consider the following algorithm.
\begin{algorithm}[H]
\caption{}\label{algbasic}
\begin{algorithmic}[1]
\Require{$\mathcal{A}$, $\E[X]$.}
\Repeat
\State Run $\mathcal{A}$ for~$2e^{\E\left[X\right]+1}$ computational steps.
\Until{$\mathcal{A}$ completed a run.}
\end{algorithmic}
\end{algorithm}

\begin{lemma}
Algorithm~\ref{algbasic} is expected to terminate in~$O\left(e^{\E\left[X\right]}\cdot \E\left[X\right]\right)$ time.
\end{lemma}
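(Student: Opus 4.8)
The plan is to bound the cost of one iteration of the loop and the expected number of iterations separately, and then multiply.

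A single iteration performs a truncated evaluation of~$\mathcal{A}$ for at most~$2e^{\E[X]+1}$ computational steps, so its cost is~$O\!\left(e^{\E[X]}\right)$ \emph{deterministically}. Because the loop restarts~$\mathcal{A}$ with fresh, independent randomness each time, the iterations are independent and identically distributed; hence the number~$N$ of iterations until the first completed run is stochastically dominated by a geometric random variable with success parameter~$p$, where~$p$ is the probability that one truncated evaluation completes. It then suffices to prove~$p\geq\frac{1}{2(\E[X]+1)}$, since this yields~$\E[N]\leq 1/p\leq 2(\E[X]+1)$ and therefore an expected total running time of at most~$\E[N]\cdot 2e^{\E[X]+1} = O\!\left(e^{\E[X]}\cdot\E[X]\right)$.

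To lower bound~$p$ I would condition on the value of~$X$ for the current run. The crucial point --- and the reason one cannot simply write~$p\geq 1-\E[e^X]/(2e^{\E[X]+1})$ --- is that~$\E[e^X]$ may be infinite, which is precisely the phenomenon the paper is about. Instead I would first apply Markov's inequality to the non-negative variable~$X$, as in the displayed inequality above, to obtain~$\Pr(X\leq\E[X]+1)\geq\frac{1}{\E[X]+1}$. Conditioned on any value~$X=x$ with~$x\leq\E[X]+1$, the expected running time of~$\mathcal{A}$ is~$e^x\leq e^{\E[X]+1}$, so a second application of Markov's inequality --- now to the conditional running time, against the threshold~$2e^{\E[X]+1}$ --- shows that the truncated evaluation completes with conditional probability at least~$\tfrac12$. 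Combining the two estimates gives~$p\geq\Pr(X\leq\E[X]+1)\cdot\tfrac12\geq\frac{1}{2(\E[X]+1)}$, completing the argument.

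The only genuine obstacle is the one just isolated: Markov's inequality must not be applied directly to~$e^X$; instead one interposes the event~$\{X\leq\E[X]+1\}$, which is what costs the extra factor of roughly~$\E[X]$ in the iteration count (and is the gap that Theorem~\ref{mainthm} later removes). The remaining ingredients --- the deterministic per-iteration bound, the i.i.d.\ structure of the restarts, and the geometric tail bound --- are routine. (For very small~$\E[X]$ one should read~$\E[X]$ as~$\max(\E[X],1)$ in the bound, but this is not the regime of interest.)
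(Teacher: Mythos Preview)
Your proof is correct and follows essentially the same approach as the paper: both argue that each iteration costs $O(e^{\E[X]})$, apply Markov to $X$ to get $\Pr(X\le \E[X]+1)\ge \frac{1}{\E[X]+1}$, apply Markov again to the conditional running time to get a $\tfrac12$ completion probability on that event, and conclude that the expected number of iterations is at most $2(\E[X]+1)$. Your exposition is a bit more explicit about the geometric structure of the restarts and about why one must not apply Markov directly to $e^X$, but the argument is the same.
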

\begin{proof}
If~$X\leq \E[X]+1$ then the expected running time of~$\mathcal{A}$ is at most~$e^{\E[X]+1}$, and hence by Markov's inequality a truncated evaluation of~$\mathcal{A}$ for~$2e^{\E\left[X\right]+1}$ steps concludes with probability at least~$\frac{1}{2}$.
By another application of Markov's inequality we got $\Pr\left(X \leq \E[X]+1\right)>\frac{1}{\E[X]+1}$.
Hence, the expected number of iterations until the truncated evaluations concludes is at most~$2\left(\E\left[X\right]+1\right)$. 
Each iteration takes~$O\left(e^{\E\left[X\right]}\right)$ time.
\end{proof}

\subsection{Optimal bound when the distribution of~$X$ is known}
The bound given by Markov's inequality in
$$
\Pr\left(X \geq \E[X]+1\right)\leq \frac{\E[X]}{\E[X]+1} = 1-\frac{1}{\E[X]+1}
$$
is attained \emph{only} by the following distribution of~$X$:
\[
\Pr\left(X=k\right) :=
\begin{cases} 
      \frac{1}{\E[X]+1} & k=0 \\
      1-\frac{1}{\E[X]+1} & k=\E[X]+1
   \end{cases}.
\]
In this distribution, on the other hand, the value of~$X$ is very small with a relatively high probability.
In particular, in the case where~$X<\E[X]+1$ we need to run~$\mathcal{A}$ for much less than~$e^{\E[X]+1}$ computational steps.
Hence, it is sensible to hope that every distribution~$X$ has some threshold other than~$\E[X]+1$ for which an algorithm similar to Algorithm~\ref{algbasic} results in a better bound.
Consider the following algorithm, which is a generalization of Algorithm~\ref{algbasic} in which the threshold can be arbitrary.

\begin{algorithm}[H]
\caption{}\label{algbasict}
\begin{algorithmic}[1]
\Require{$\mathcal{A}$, $t$.}
\Repeat
\State Run $\mathcal{A}$ for~$2e^{t}$ computational steps.
\Until{$\mathcal{A}$ completed a run.}
\end{algorithmic}
\end{algorithm}

\begin{lemma}\label{knownX}
Let~$X$ be a non-negative random variable.
There exists~$t\in [0,\E[X]+1]$ such that~$\frac{e^t}{\Pr\left(X< t\right)} \leq e^{\E[X]+1}$.
\end{lemma}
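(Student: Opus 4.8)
The plan is to argue by contradiction. Suppose that \emph{no} such~$t$ exists; then for every~$t\in[0,\E[X]+1]$ we have $e^t/\Pr(X<t) > e^{\E[X]+1}$, equivalently
$$
\Pr(X<t) < e^{t-\E[X]-1},
\qquad\text{i.e.}\qquad
\Pr(X\geq t) > 1 - e^{-(\E[X]+1-t)}.
$$
The idea is that such a uniform lower bound on the tail of~$X$, fed into the layer-cake identity $\E[X]=\int_0^\infty \Pr(X\geq t)\diff t$, forces~$\E[X]$ to strictly exceed itself.

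Concretely, write $m:=\E[X]$, which we may assume finite (otherwise the interval $[0,\E[X]+1]$ and the claim degenerate). Restricting the layer-cake integral to the window $[0,m+1]$, where the integrand is nonnegative, and substituting the assumed tail bound gives
$$
m = \int_0^\infty \Pr(X\geq t)\diff t \;\geq\; \int_0^{m+1}\Pr(X\geq t)\diff t \;>\; \int_0^{m+1}\left(1-e^{-(m+1-t)}\right)\diff t.
$$
The one remaining step is to evaluate the last integral: with the substitution $u=m+1-t$ one gets $\int_0^{m+1}e^{-(m+1-t)}\diff t=\int_0^{m+1}e^{-u}\diff u = 1-e^{-(m+1)}$, so the right-hand side equals $(m+1)-\bigl(1-e^{-(m+1)}\bigr)=m+e^{-(m+1)}$. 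Hence $m>m+e^{-(m+1)}$, which is impossible since $e^{-(m+1)}>0$; this contradiction proves the lemma, and in fact shows the required~$t$ can be taken in the open interval where the tail bound first fails.

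I do not expect a genuine obstacle here; the content is the single elementary integral above. The only points needing a word of care are the degenerate cases: when $\Pr(X<t)=0$ the assumed strict inequality $\Pr(X<t)<e^{t-m-1}$ still holds because its right-hand side is positive, so the contradiction argument is unaffected; and one should apply the layer-cake formula for a nonnegative random variable correctly, noting that it is harmless to replace $\Pr(X\geq t)$ by $\Pr(X>t)$ since the two agree for all but countably many~$t$. Everything else is routine.
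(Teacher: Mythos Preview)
Your proof is correct and follows essentially the same argument as the paper: assume the inequality fails for all $t\in[0,\E[X]+1]$, rewrite this as a uniform lower bound on $\Pr(X\geq t)$, plug into the layer-cake formula $\E[X]=\int_0^\infty \Pr(X\geq t)\diff t$ restricted to $[0,\E[X]+1]$, and compute the resulting integral to obtain $\E[X]>\E[X]+e^{-(\E[X]+1)}$, a contradiction. Your additional remarks on the degenerate cases (infinite expectation, $\Pr(X<t)=0$, and the a.e.\ equivalence of $\Pr(X\geq t)$ and $\Pr(X>t)$) are fine and do not diverge from the paper's approach.
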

\begin{proof}
Assume by contradiction that $\frac{e^t}{\Pr\left(X< t\right)} > e^{\E\left[X\right]+1}$ for every~$t\in [0,\E[X]+1]$.
Equivalently, 
$$
\Pr\left(X\geq t\right) = 
1 - \Pr\left(X< t\right) >
1 - e^{t-\left(\E\left[X\right]+1\right)}
.
$$
Therefore,
\begin{align*}
    \E[X] &= \int_{0}^{\infty} \Pr(X\geq t) \diff t \geq \int_{0}^{\E[X]+1} \Pr(X\geq t) \diff t \\
    &> \int_{0}^{\E[X]+1} \left(1 - e^{t-\left(\E\left[X\right]+1\right)}\right) \diff t\\
    &= \left(\E[X]+1\right) - \left(1 - e^{-\left(\E\left[X\right]+1\right)}\right) 
    = \E[X] + e^{-\left(\E\left[X\right]+1\right)} \\&> \E[X],
\end{align*}
which is a contradiction.
\end{proof}

Lemma~\ref{knownX} implies the following.
\begin{corollary}
For every distribution~$X$ there exists a value of~$t=t(X)$ for which Algorithm~\ref{algbasict} runs in~$O\left(e^{\E[X]}\right)$ time.
\end{corollary}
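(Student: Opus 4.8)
The plan is to instantiate Algorithm~\ref{algbasict} with the threshold $t=t(X)$ guaranteed by Lemma~\ref{knownX}, and then bound its expected running time by a geometric-series argument over the iterations of the \textbf{repeat} loop. First I would fix $t\in[0,\E[X]+1]$ with $\frac{e^t}{\Pr(X<t)}\leq e^{\E[X]+1}$; in particular $\Pr(X<t)>0$, since otherwise the left-hand side would be infinite, so the algorithm is well defined and does terminate almost surely.

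The key step is a conditional application of Markov's inequality. Condition on the event $X<t$. On this event $e^X<e^t$, so $\E[\,e^X \mid X<t\,]\leq e^t$, i.e.\ the expected running time of $\mathcal{A}$ conditioned on $X<t$ is at most $e^t$. Hence, by Markov's inequality, a single truncated evaluation of $\mathcal{A}$ for $2e^t$ computational steps completes with probability at least $\tfrac12$, conditioned on $X<t$. Removing the conditioning, a single iteration of the loop completes with probability at least $\tfrac12\Pr(X<t)$, where here it is important to be explicit that successive iterations invoke $\mathcal{A}$ with fresh independent randomness (in particular a fresh draw of $X$), so that the iterations are mutually independent.

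Consequently the number $N$ of loop iterations satisfies $\Pr(N>k)\leq\bigl(1-\tfrac12\Pr(X<t)\bigr)^k$, so $\E[N]=\sum_{k\geq 0}\Pr(N>k)\leq\frac{2}{\Pr(X<t)}$. Each iteration costs $2e^t$ steps, so the total expected running time is at most $\frac{2}{\Pr(X<t)}\cdot 2e^t=\frac{4e^t}{\Pr(X<t)}\leq 4e^{\E[X]+1}=4e\cdot e^{\E[X]}=O\!\left(e^{\E[X]}\right)$, which is the claimed bound.

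The one point that requires genuine care is the conditional Markov step: establishing $\E[\,e^X\mid X<t\,]\leq e^t$ and thereby lower-bounding the per-iteration completion probability by $\tfrac12\Pr(X<t)$, together with the observation that the iterations are independent so that $N$ is indeed (stochastically dominated by) a geometric random variable. Everything else is a one-line estimate that plugs the conclusion of Lemma~\ref{knownX} into the geometric sum.
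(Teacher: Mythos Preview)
Your proposal is correct and follows exactly the route the paper intends: the corollary is stated in the paper without a separate proof because it is the same Markov-plus-geometric argument already used for Algorithm~\ref{algbasic} (Lemma~2.1), now instantiated with the threshold~$t$ supplied by Lemma~\ref{knownX}. Your write-up spells this out faithfully, including the conditional Markov step and the independence of iterations, and the final bound $\frac{4e^t}{\Pr(X<t)}\le 4e\cdot e^{\E[X]}$ is precisely what the paper has in mind.
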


We note that the additive constant~$+1$ in the exponent in Lemma~\ref{knownX} is necessary. 
For a parameter~$E$, consider the random variable~$X$ supported on~$[0,E+1+\ln \left(1+e^{-\left(E+1\right)}\right)]$ and distributed with density~$f(x) := e^{x-\left(E+1\right)}$;
Its expectation is
\begin{align*}
\E[X] &= \int_0^{E+1+\ln \left(1+e^{-\left(E+1\right)}\right)} xf(x) \diff x \\
&= \left(\left(x-1\right)e^{x-\left(E+1\right)}\right)\bigg\rvert_{0}^{E+1+\ln \left(1+e^{-\left(E+1\right)}\right)} \\
&= \left(E + \ln \left(1+e^{-\left(E+1\right)}\right)\right)\cdot\left(1 + e^{-\left(E+1\right)}\right) + e^{-\left(E+1\right)}\\
&= E + O\left(e^{-\left(E+1\right)}\right) = E+o(1)
,
\end{align*}
where the~$o(1)$ term is vanishing when~$E\rightarrow \infty$.
On the other hand, for any~$t\geq 0$ we have
$$
\frac{e^t}{\Pr\left(X< t\right)} = 
\frac{e^t}{\min\left(1,\;e^{t-\left(E+1\right)} - e^{-\left(E+1\right)}\right)}
>
\frac{e^t}{e^{t-\left(E+1\right)}} = e^{E+1}.
$$

\subsection{Optimal algorithm when the distribution of~$X$ is unknown}
If the only thing known about the distribution of~$X$ is its expectation~$\E[X]$, then there is no fixed value of~$t$ for which Algorithm~\ref{algbasict} is better than Algorithm~\ref{algbasic}. 
Fix a value of~$\E[X]$ and a choice of~$t$.
If~$t<\E[X]$ then with the constant distribution~$X\equiv \E[X]$ Algorithm~\ref{algbasict} never terminates.
Otherwise,~$t\geq \E[X]$ and we consider the following distribution~$X$:
\[
\Pr\left(X=k\right) :=
\begin{cases} 
      1-\frac{\E[X]}{t+1} & k=0 \\
      \frac{\E[X]}{t+1} & k=t+1
   \end{cases}.
\]
For this distribution, the expected running time of Algorithm~\ref{algbasict} is 
$$
\frac{e^t}{1 - \frac{\E[X]}{t+1}}
=
e^{\E[X]}\cdot \frac{e^s}{\left(\frac{s+1}{\E[X]+s+1}\right)}
=
e^{\E[X]} \left(1 + \frac{\E[X]}{s+1}\right)e^s
\geq
e^{\E[X]} \E[X] \cdot \frac{e^s}{s+1} \geq e^{\E[X]} \E[X]
,
$$
where~$s:=t-\E[X]\geq 0$ and the last inequality follows as~$e^s\geq s+1$ for any~$s$. 

To improve Algorithm~\ref{algbasic} then, we need to consider \emph{several thresholds}.
We demonstrate this idea with the following Lemma.
\begin{lemma}\label{twothresholds}
Let~$X$ be a non-negative random variable.
It holds that either~$\Pr\left(X\leq \E\left[X\right] - \ln \E\left[X\right]\right) > \frac{1}{\E\left[X\right]+1}$ or~$\Pr\left(X\leq \E\left[X\right] + 2\right) > \frac{1}{\ln\E\left[X\right]+2}$.
\end{lemma}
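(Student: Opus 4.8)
The plan is to mimic the proof of Lemma~\ref{knownX}: assume both inequalities fail and derive a contradiction from the layer-cake identity $\E[X]=\int_0^\infty \Pr(X> t)\,\diff t$. Write $E:=\E[X]$, and suppose towards a contradiction that both $\Pr\left(X\leq E-\ln E\right)\leq \frac{1}{E+1}$ and $\Pr\left(X\leq E+2\right)\leq \frac{1}{\ln E+2}$. Passing to complements gives
$$
\Pr\left(X> E-\ln E\right)\geq \frac{E}{E+1}\qquad\text{and}\qquad \Pr\left(X> E+2\right)\geq \frac{\ln E+1}{\ln E+2}.
$$
Since $t\mapsto \Pr(X> t)$ is non-increasing, the first bound shows $\Pr(X> t)\geq \frac{E}{E+1}$ for every $t\in[0,E-\ln E]$, and the second shows $\Pr(X> t)\geq \frac{\ln E+1}{\ln E+2}$ for every $t\in[0,E+2]$. (Throughout we are in the regime $E\geq 1$, which is the interesting one and which guarantees $0\leq E-\ln E\leq E+2$ so that the two intervals are nested and non-degenerate; the case $E<1/e$, where the first disjunct already follows from Markov's inequality applied at the threshold $E-\ln E$, can be disposed of separately.)

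Next I would split the integral $\E[X]=\int_0^\infty \Pr(X> t)\,\diff t$ at the two thresholds and bound each of the first two pieces by the length of the interval times the pointwise lower bound on $\Pr(X> t)$ just obtained, discarding the tail beyond $E+2$:
$$
E \;\geq\; \int_0^{E-\ln E}\Pr(X> t)\,\diff t + \int_{E-\ln E}^{E+2}\Pr(X> t)\,\diff t \;\geq\; \left(E-\ln E\right)\cdot\frac{E}{E+1} + \left(\ln E+2\right)\cdot\frac{\ln E+1}{\ln E+2}.
$$
The second summand collapses to $\ln E+1$. Bringing the first summand over the common denominator $E+1$, the right-hand side equals $\frac{E^2+E+\ln E+1}{E+1}$, so the inequality becomes $E(E+1)\geq E^2+E+\ln E+1$, i.e.\ $\ln E+1\leq 0$. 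This is false whenever $E>1/e$, which is the contradiction that proves the lemma.

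The argument is entirely routine once the two thresholds are fixed; the genuine content — choosing the thresholds $E-\ln E$ and $E+2$ together with the probabilities $\frac{1}{E+1}$ and $\frac{1}{\ln E+2}$ so that the middle integral telescopes \emph{exactly} against the $\ln E$ terms — is already encoded in the statement, so no search is needed. The only points requiring care are the endpoint bookkeeping (verifying that the intervals are correctly ordered and that the tail-probability lower bounds hold on the whole interval, not merely at its right endpoint) and recording the mild hypothesis on $E$ under which the claim is not vacuous. I expect no real obstacle beyond this.
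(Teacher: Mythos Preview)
Your argument is correct: assuming both alternatives fail and plugging the resulting lower bounds on $\Pr(X>t)$ into the layer-cake identity yields $\ln E+1\le 0$, a contradiction for $E>1/e$; the residual small-$E$ case is indeed routine.

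This is, however, a genuinely different route from the paper's. The paper does \emph{not} reuse the integral argument of Lemma~\ref{knownX}; instead it assumes only the first alternative fails, conditions on the event $\{X>E-\ln E\}$, observes that the conditional expectation of $X$ is at most $E+1$ (so that the shifted variable $Y=X-(E-\ln E)$ has conditional expectation at most $\ln E+1$), and then applies Markov's inequality to $Y$ at the threshold $\ln E+2$. Your layer-cake computation is arguably cleaner for this two-threshold statement, and it makes the ``telescoping'' of the $\ln E$ terms completely transparent. What the paper's approach buys is a \emph{reduction} from $X$ to the variable $Y\mid Y>0$ with exponentially smaller expectation; this is exactly the structure that is iterated $\lambda^\star(E)$ times in Lemma~\ref{corethresholds} to obtain the final $O(e^{\E[X]})$ bound. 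So while your proof closes this lemma neatly, it does not directly suggest the recursive scheme that the paper exploits next.
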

\begin{proof}
Assume that~$p:=\Pr\left(X\leq \E\left[X\right] - \ln \E\left[X\right]\right) \leq \frac{1}{\E\left[X\right]+1}$.
We observe that
\begin{align*}
\E\left[X\right] &= 
p\; \E\left[X \;|\; X \leq \E\left[X\right]-\ln \E\left[X\right]\right]
+
\left(1-p\right)\E\left[X \;|\; X > \E\left[X\right]-\ln \E\left[X\right]\right]
\\
&\geq \left(1-p\right)\E\left[X \;|\; X > \E\left[X\right]-\ln \E\left[X\right]\right]
,
\end{align*}
and hence
\begin{align*}
\E\left[X \;|\; X > \E\left[X\right]-\ln \E\left[X\right]\right] &\leq
\frac{1}{1-p}\E[X]\\
&\leq \frac{1}{1-\frac{1}{\E[X]+1}}\E[X]\\
&=\E[X]+1
.
\end{align*}
Denote by~$Y:=X-\left(\E\left[X\right]-\ln \E\left[X\right]\right)$.
The above can now be rephrased as~$\E\left[Y \; | \; Y>0\right] \leq \ln \E\left[X\right] + 1$.
Applying Markov's inequality to~$Y$ conditioned on $Y>0$, we get
$$
\Pr\left(Y > \ln \E\left[X\right] + 2 \; | \; Y>0\right) \leq \frac{\ln \E\left[X\right] + 1}{\ln \E\left[X\right] + 2} = 1 - \frac{1}{\ln \E\left[X\right] + 2}
.
$$
We conclude by noting that
$$
\Pr\left(X > \E\left[X\right] + 2\right)
=
\Pr\left(Y > \ln \E\left[X\right] + 2\right)
\leq
\Pr\left(Y > \ln \E\left[X\right] + 2 \; | \; Y>0\right)
.
$$
\end{proof}

Consider the following Algorithm.
\begin{algorithm}[H]
\caption{}\label{alg2thresh}
\begin{algorithmic}[1]
\Require{$\mathcal{A}$, $\E[X]$.}
\Repeat
\For{$\lceil \E[X]+1 \rceil$ \text{\bf times}}
\State Run $\mathcal{A}$ for~$2e^{\E\left[X\right] - \ln \E\left[X\right]}$ computational steps.
\EndFor
\For{$\lceil \ln\E[X]+2 \rceil$ \text{\bf times}}
\State Run $\mathcal{A}$ for~$2e^{\E\left[X\right] +2}$ computational steps.
\EndFor
\Until{$\mathcal{A}$ completed a run.}
\end{algorithmic}
\end{algorithm}

Due to Lemma~\ref{twothresholds}, each iteration of the outermost loop of Algorithm~\ref{alg2thresh} succeeds to fully execute~$\mathcal{A}$ with probability larger than~$1-e^{-1}$.
Thus, in expectation we run this loop for a constant number of iterations.
The first \textbf{for} loop takes~$O\left(\E\left[X\right] \cdot e^{\E\left[X\right] - \ln \E\left[X\right]}\right) = O\left(e^{\E\left[X\right]}\right)$ expected time, and the second takes~$O\left(e^{\E\left[X\right]} \ln \E\left[X\right]\right)$.
We conclude the following.
\begin{corollary}
Algorithm~\ref{alg2thresh} runs in expected time~$O\left(e^{\E\left[X\right]} \ln \E\left[X\right]\right)$. 
\end{corollary}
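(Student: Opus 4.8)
The plan is to combine Lemma~\ref{twothresholds} with a Markov-plus-amplification argument, mirroring the analysis of Algorithm~\ref{algbasic} but now split across the two \textbf{for} loops. Throughout, recall that each execution of $\mathcal{A}$ uses fresh randomness, so the copy of $X$ attached to each run performed inside one iteration of the outer \textbf{repeat} loop — and across distinct outer iterations — is independent of all the others.

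The first and main step is to show that a single iteration of the outer loop fully executes $\mathcal{A}$ with probability bounded below by an absolute constant. By Lemma~\ref{twothresholds}, at least one of two cases holds. If $\Pr\left(X\le \E[X]-\ln\E[X]\right) > \frac{1}{\E[X]+1}$, then conditioned on the value of $X$ being at most $\E[X]-\ln\E[X]$ the expected running time of $\mathcal{A}$ is at most $e^{\E[X]-\ln\E[X]}$, so by Markov's inequality a truncated evaluation for $2e^{\E[X]-\ln\E[X]}$ steps completes with probability at least $\frac12$; hence each run in the first \textbf{for} loop completes $\mathcal{A}$ with probability at least $\frac{1}{2(\E[X]+1)}$, and the chance that none of the $\lceil\E[X]+1\rceil$ of them does is at most $\left(1-\frac{1}{2(\E[X]+1)}\right)^{\E[X]+1}\le e^{-1/2}$. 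In the complementary case $\Pr\left(X\le \E[X]+2\right) > \frac{1}{\ln\E[X]+2}$, the identical computation applied to the second \textbf{for} loop — threshold $2e^{\E[X]+2}$, repeated $\lceil\ln\E[X]+2\rceil$ times — shows it fails with probability at most $\left(1-\frac{1}{2(\ln\E[X]+2)}\right)^{\ln\E[X]+2}\le e^{-1/2}$. Since the algorithm runs both loops regardless of which case occurs, each outer iteration succeeds with probability at least $1-e^{-1/2}$, so the number of outer iterations is stochastically dominated by a geometric random variable with this success probability and hence has expectation $O(1)$.

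It then remains to bound the work per iteration and assemble. A single outer iteration deterministically performs at most $\lceil\E[X]+1\rceil$ truncated runs of length $2e^{\E[X]-\ln\E[X]}=2e^{\E[X]}/\E[X]$ and at most $\lceil\ln\E[X]+2\rceil$ truncated runs of length $2e^{\E[X]+2}$, for a worst-case total of $O(e^{\E[X]})+O(e^{\E[X]}\ln\E[X])=O(e^{\E[X]}\ln\E[X])$ computational steps. As this per-iteration bound holds deterministically and the expected number of iterations is $O(1)$, the expected total running time is at most their product, $O(e^{\E[X]}\ln\E[X])$. I expect the only delicate point to be the very first step: one must use that the conditioning is on the \emph{value} of $X$ — so that $\E[\,\text{running time}\mid X=x\,]=e^x$ may be bounded by substituting the threshold value of $x$ — before Markov can be invoked, and one must keep the two cases of Lemma~\ref{twothresholds} genuinely separate so that the constant success probability is attained in both; the subsequent amplification over repeated independent runs and the accounting of total time are routine.
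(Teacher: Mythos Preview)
Your proof is correct and follows essentially the same route as the paper's: use Lemma~\ref{twothresholds} to guarantee that at least one of the two \textbf{for} loops succeeds with constant probability per outer iteration, then bound the deterministic cost of an iteration as $O(e^{\E[X]})+O(e^{\E[X]}\ln\E[X])$. Your success-probability constant $1-e^{-1/2}$ differs slightly from the paper's stated $1-e^{-1}$, but your computation is the careful one and the discrepancy is immaterial to the $O(\cdot)$ conclusion.
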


Intuitively, the \emph{proof} of Lemma~\ref{twothresholds} can be viewed as a reduction from the variable~$X$ to the variable~$Y|\left(Y>0\right)$, that has a much lower expectation: $\E\left[Y\;|\;Y>0\right]\leq \ln\E[X]+1$.
We can thus hope that iterating the proof for~$\ln^\star \E[X]$ times would result in reducing~$X$ to a variable with constant expectation.
A natural implementation of this idea would result in an algorithm that runs in expected time~$O\left(e^{\E\left[X\right]} \ln^\star \E\left[X\right]\right)$. 
We next formalize this intuition, and do so in a more careful manner to avoid the~$\ln^\star \E\left[X\right]$ factor. 

\begin{definition}
Let~$\lambda(x) := 3\ln(x)$ and note it is strictly increasing and shrinking for all~$x\geq 5$.
We define~$\lambda^\star(x)$, for~$x\geq 5$, to be the smallest~$k\in \mathbb{N}$ such that~$\lambda^{(k)}(x)\leq 5$.
\end{definition}

\begin{claim}\label{lambdabasic}
The following hold for all~$x\geq 5$:
\begin{enumerate}
    \item $\lambda^\star(x) = \Theta\left(\log^\star x\right)$.
    \item $\lambda^{\left(\lambda^\star\left(x\right)\right)}(x) > 4$.
    \item $\sum_{i=0}^{\lambda^\star(x)} \frac{1}{\lambda^{(i)}(x)} < 2$.
\end{enumerate}
\end{claim}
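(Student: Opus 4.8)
The plan is to work with the strictly decreasing sequence $a_i := \lambda^{(i)}(x)$ for $i = 0, 1, \ldots, \lambda^\star(x)$, recording two facts that come straight from the definition of $\lambda^\star$: $a_i > 5$ for every $i < \lambda^\star(x)$ (minimality of $\lambda^\star$), whereas $a_{\lambda^\star(x)} \le 5$. Part~2 then falls out immediately: if $\lambda^\star(x) = 0$ then $a_0 = x \ge 5 > 4$, and otherwise $a_{\lambda^\star(x)-1} > 5$ forces $a_{\lambda^\star(x)} = 3\ln a_{\lambda^\star(x)-1} > 3\ln 5 > 4$ by monotonicity of $\ln$. This already tells us the last term $1/a_{\lambda^\star(x)}$ of the sum in part~3 is below $1/4$, which is the only thing part~3 borrows from part~2.

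For part~1, since $\lambda(x) = 3\ln x$ and $\log x$ differ only by the constant factor $3\ln 2$, their star-functions are within a constant factor of one another. Concretely, $\lambda^\star(x) = \Omega(\log^\star x)$ because $3\ln x > \log x$ for $x > 1$, which gives $\lambda^{(k)}(x) \ge \log^{(k)}(x)$ for all $k \le \log^\star x$ by a monotone induction, whence $\log^\star x \le \lambda^\star(x) + O(1)$; and $\lambda^\star(x) = O(\log^\star x)$ because $\lambda^{(2)}(x) = 3\ln(3\ln x) \le \log x$ once $x$ exceeds an absolute constant (the left side grows like $\ln\ln x$, the right like $\ln x$), with both quantities bounded below that constant.

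Part~3 is the substantive claim. Fix an absolute constant $c$ with $3\ln a \le a/2$ for all $a \ge c$; one may take $c = 20$, since $a \mapsto a - 6\ln a$ is increasing for $a > 6$ and positive at $a = 20$. Split $\sum_{i=0}^{\lambda^\star(x)} 1/a_i$ into a prefix of indices with $a_i > c$ (these form an initial segment $\{0, \ldots, m-1\}$, possibly empty, since $a_i$ decreases, and $m \le \lambda^\star(x)$ because $a_{\lambda^\star(x)} \le 5 < c$) and the tail $\{m, \ldots, \lambda^\star(x)\}$. On the prefix, $a_i \ge c$ yields $a_{i+1} = 3\ln a_i \le a_i/2$, i.e.\ $1/a_i \le \tfrac12 \cdot 1/a_{i+1}$, so chaining this down from $i = m-1$ bounds the prefix sum by $\frac{1}{a_m}\sum_{j \ge 1} 2^{-j} = 1/a_m < 1/4$, using $a_m > 4$ from part~2. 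For the tail, I bound its length: from any value $\le c = 20$, five applications of $3\ln(\cdot)$ land strictly below $5$ (explicitly $20 \mapsto 8.99 \mapsto 6.59 \mapsto 5.66 \mapsto 5.20 \mapsto 4.95$), so $\lambda^\star(x) \le m + 5$ and the tail has at most $6$ indices, each with $a_i > 4$ and hence contributing less than $1/4$. Adding the pieces, $\sum_{i=0}^{\lambda^\star(x)} 1/a_i < 1/4 + 6/4 = 7/4 < 2$.

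The real work, and essentially the only obstacle, lies in part~3, and it reduces to two elementary inequalities: first, $3\ln a \le a/2$ for $a \ge c$, which makes consecutive reciprocals at least double along the prefix; and second, that $a \mapsto 3\ln a$ drops from $\le c$ to below $5$ within a bounded number of steps, which keeps the tail short. After that one only has to check that the resulting constants sum to something strictly below $2$, for which there is comfortable slack, so the precise value of $c$ and the exact step count are immaterial. Part~1 needs nothing beyond the constant-factor comparison of $3\ln x$ with $\log x$ plus the bound $\lambda^{(2)}(x) \le \log x$ for large $x$, and part~2 is a one-line consequence of the definition of $\lambda^\star$.
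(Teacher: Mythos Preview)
Your proof is correct and follows essentially the same approach as the paper: part~2 is the same one-line argument from minimality of $\lambda^\star$, part~1 is the same sandwich $\log x \le \lambda(x)$ and $\lambda^{(2)}(x)\le \log x$ for large $x$, and part~3 is the same prefix/tail split at a threshold where $3\ln a\le a/2$ (the paper uses $c=17$, you use $c=20$) with a geometric bound on the prefix and a constant-length tail of terms each below $1/4$. The only cosmetic differences are the choice of $c$ and that you bound the prefix by $1/a_m<1/4$ while the paper bounds it by $\tfrac{2}{17}$; both leave ample slack below~$2$.
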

\begin{proof}
$(1)$ We have that~$\lambda^{(2)}(x) \leq \log x\leq \lambda(x)$ for all~$x\geq 410$.
In particular, $\log^\star x \leq \lambda^\star (x) \leq 2 \log^\star x + \lambda^\star (410)$.

$(2)$ $\lambda^{\left(\lambda^\star\left(x\right)-1\right)}(x) > 5$ and hence $\lambda^{\left(\lambda^\star\left(x\right)\right)}(x) > \lambda(5) > 4.82$.

$(3)$ For all~$x\geq 17$ it holds that~$\lambda(x)\leq \frac{x}{2}$.
Let~$k'$ be the smallest integer such that~$\lambda^{(k')}(x) < 17$.
We thus have $$\sum_{i=0}^{k'-1} \frac{1}{\lambda^{(i)}(x)} < \frac{1}{17} \sum_{i=0}^{\infty} 2^{-i} = \frac{2}{17}.$$
On the other hand, there are at most~$\lambda^\star(17)$ summands that are strictly larger than~$\frac{1}{17}$, thus by~$(2)$ we have
$$
\sum_{i=k'}^{\lambda^\star(x)} \frac{1}{\lambda^{(i)}(x)} < \frac{\lambda^\star(17)}{4} = \frac{5}{4}
.$$
\end{proof}


We are now ready to prove a generalized version of Lemma~\ref{twothresholds}, that is going to be the core of our final algorithm.

\begin{lemma}\label{corethresholds}
Let~$X$ be a non-negative distribution and~$E\geq \max\left(\E\left[X\right],\;5\right)$ be an upper bound on its expectation.
There either exists~$1\leq k \leq \lambda^\star(E)$ such that~$\Pr\left(X < E - \lambda^{(k)}\left(E\right)\right) \geq \left(\left(\lambda^{(k-1)}\left(E\right)+2\right)^2+1\right)^{-1}$, or it holds that $\Pr\left(X < E + 10\right) \geq \frac{1}{2}$.
\end{lemma}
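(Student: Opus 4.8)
The plan is to iterate the one-step argument behind Lemma~\ref{twothresholds}, now tracking an entire sequence of conditional expectations instead of just one. Suppose the first alternative fails, so that for every $1\le k\le\lambda^\star(E)$ we have $p_k:=\Pr\!\left(X<E-\lambda^{(k)}(E)\right)<\left(\left(\lambda^{(k-1)}(E)+2\right)^2+1\right)^{-1}$, and write $b_k:=E-\lambda^{(k)}(E)$. Since $\lambda$ is strictly shrinking on $[5,\infty)$ and $\lambda^{(k-1)}(E)>5$ whenever $k\le\lambda^\star(E)$, the values $b_0=0<b_1<\cdots<b_{\lambda^\star(E)}$ are strictly increasing; the same lower bound on $\lambda^{(k-1)}(E)$ forces $p_k<\frac{1}{50}$, so each event $\{X\ge b_k\}$ has probability at least $\frac{49}{50}$ and the conditional expectations $\mu_k:=\E\!\left[X-b_k\mid X\ge b_k\right]$ are well defined and finite, with $\mu_0=\E[X]\le E$.

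The core recursion comes from splitting $\{X\ge b_{k-1}\}$ according to whether $X\ge b_k$ and dropping the non-negative contribution of $\{b_{k-1}\le X<b_k\}$, which gives $\mu_{k-1}\ge\Pr\!\left(X\ge b_k\mid X\ge b_{k-1}\right)\bigl[(b_k-b_{k-1})+\mu_k\bigr]$. Bounding $\Pr\!\left(X\ge b_k\mid X\ge b_{k-1}\right)\ge\Pr(X\ge b_k)=1-p_k$ and using $b_k-b_{k-1}=\lambda^{(k-1)}(E)-\lambda^{(k)}(E)$, I would rearrange this into $\mu_k\le\frac{\mu_{k-1}}{1-p_k}-\lambda^{(k-1)}(E)+\lambda^{(k)}(E)$ and then substitute the assumed bound on $p_k$ in the form $\frac{1}{1-p_k}<1+\left(\lambda^{(k-1)}(E)+2\right)^{-2}$. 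This is precisely where the squared denominator in the lemma's threshold enters, and it is the whole reason the final bound avoids the naive $\ln^\star(E)$ loss.

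I would then show by induction that $\mu_k\le\lambda^{(k)}(E)+s_k$, where $s_0=0$ and $s_k:=s_{k-1}+\frac{\lambda^{(k-1)}(E)+s_{k-1}}{\left(\lambda^{(k-1)}(E)+2\right)^2}$: plugging the inductive hypothesis into the recursion above makes the step immediate. In parallel one maintains $s_k<2$: as long as $s_{k-1}\le 2$ the increment satisfies $s_k-s_{k-1}\le\frac{1}{\lambda^{(k-1)}(E)+2}<\frac{1}{\lambda^{(k-1)}(E)}$, so $s_k\le\sum_{i=0}^{k-1}\frac{1}{\lambda^{(i)}(E)}<2$ by Claim~\ref{lambdabasic}(3), and the induction closes. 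Taking $k=\lambda^\star(E)$ and using $\lambda^{(\lambda^\star(E))}(E)\le 5$ yields $\mu_{\lambda^\star(E)}<5+2=7$.

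Finally I would convert this into the stated conclusion with a single application of Markov's inequality. By Claim~\ref{lambdabasic}(2) we have $\lambda^{(\lambda^\star(E))}(E)>4$, hence $b_{\lambda^\star(E)}+14<E+10$; applying Markov to $X-b_{\lambda^\star(E)}$ conditioned on $\{X\ge b_{\lambda^\star(E)}\}$ gives $\Pr(X\ge E+10)\le\Pr\!\left(X\ge b_{\lambda^\star(E)}+14\right)\le\frac{\mu_{\lambda^\star(E)}}{14}<\frac12$, that is, $\Pr(X<E+10)>\frac12$, which is the second alternative. (The degenerate case $\lambda^\star(E)=0$, forced only when $E=5$, is covered by exactly this last step with $b_0=0$, or directly by plain Markov on $X$.) I expect the main obstacle to be not any individual inequality but the bookkeeping: keeping the nested conditional probabilities and expectations straight, checking that every conditioning event has positive probability, and verifying that the constants — the $+2$ and the $+1$ in the threshold, the cutoff $5$ defining $\lambda^\star$, and the slack $10$ in $E+10$ — are tuned so that both the convergent-series estimate of Claim~\ref{lambdabasic}(3) and the closing factor-$14$ Markov step go through simultaneously.
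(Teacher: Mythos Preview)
Your proposal is correct and follows essentially the same route as the paper: both assume the first alternative fails, set $Y_k=X-(E-\lambda^{(k)}(E))$ (your $X-b_k$), prove by induction that $\E[Y_k\mid Y_k\ge 0]\le \lambda^{(k)}(E)+\text{(error)}$ with an error term controlled by the convergent sum of Claim~\ref{lambdabasic}(3), and finish with Markov at threshold $14$. The only difference is bookkeeping---the paper packages the error as $E_k:=\lambda^{(k)}(E)+\sum_{i<k}1/E_i$ and uses $\frac{1}{1-1/((E_{k-1})^2+1)}=1+1/E_{k-1}^2$, while you track $s_k$ via a recursion and bound $\frac{1}{1-p_k}$ directly through the lemma's stated threshold; these are equivalent parameterizations of the same induction.
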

\begin{proof}
We recursively denote by~$E_0:=E$ and by~$E_k:=\lambda^{(k)}(E) + \sum_{i=0}^{k-1}\frac{1}{E_i}$ for~$1\leq k \leq \lambda^\star (E)$. Note that~$E_k\geq \lambda^{(k)}(E)$ and hence also
$$
E_k = \lambda^{(k)}(E) + \sum_{i=0}^{k-1}\frac{1}{E_i}
\leq \lambda^{(k)}(E) + \sum_{i=0}^{k-1}\frac{1}{\lambda^{(i)}(E)}
< \lambda^{(k)}(E) +2,
$$
where the last inequality follows from Claim~\ref{lambdabasic}.
In particular, $\lambda^{(k)}(E) \leq E_k < \lambda^{(k)}(E) +2$.

Assume that~$\Pr\left(X < E - \lambda^{(k)}\left(E\right)\right) < \left(\left(\lambda^{(k-1)}\left(E\right)+2\right)^2+1\right)^{-1} < \frac{1}{\left(E_{k-1}\right)^2+1}$ for every~$1\leq k \leq \lambda^\star\left(E\right)$.

Denote by~$Y_k := X-\left(E-\lambda^{(k)}\left(E\right)\right)$ for~$k\geq 0$.
We prove by induction on~$k$ that~$\E\left[Y_k \; | \; Y_k \geq 0\right]\leq E_k$.
For~$k=0$ the claim is straightforward as~$Y_0 = X$ and~$E_0=E$.
For the inductive step, we assume the hypothesis holds for~$k-1$ and show it holds for~$k$.
We note that~$Y_{k-1}>Y_k$  and hence if~$Y_k\geq 0$ then~$Y_{k-1}\geq 0$ as well. Hence,
\begin{align*}
\E[Y_{k-1}\;|\;Y_{k-1}\geq 0] &\geq \Pr\left(Y_{k} \geq 0\;|\;Y_{k-1}\geq 0\right) \E\left[Y_{k-1} \; | \; Y_k \geq 0\right]\\
&\geq
\Pr\left(Y_{k} \geq 0\right) \E\left[Y_{k-1} \; | \; Y_k \geq 0\right].
\end{align*}
Thus, by the induction hypothesis we have
\begin{align*}
\E\left[Y_{k-1} \; | \; Y_k \geq 0\right] &\leq \frac{\E[Y_{k-1}\;|\;Y_{k-1}\geq 0]}{\Pr\left(Y_{k} \geq 0\right)}\\
&\leq 
\frac{E_{k-1}}{1 - \frac{1}{\left(E_{k-1}\right)^2+1}}
\\&=
E_{k-1} + \frac{1}{E_{k-1}}
.
\end{align*}
Therefore, 
\begin{align*}
    \E[Y_{k}\;|\;Y_{k}\geq 0] &= \E[Y_{k-1}\;|\;Y_{k}\geq 0] + \lambda^{(k)}\left(E\right) - \lambda^{(k-1)}\left(E\right) \\&\leq E_{k-1} + \frac{1}{E_{k-1}} + \lambda^{(k)}\left(E\right) - \lambda^{(k-1)}\left(E\right)\\&=E_k.
\end{align*}

In particular, we have that~$\E\left[Y_{\lambda^\star\left(E\right)} \; | \; Y_{\lambda^\star\left(E\right)} \geq 0\right]\leq E_{\lambda^\star\left(E\right)} < \lambda^{\left(\lambda^\star\left(E\right)\right)}(E) + 2 \leq 7$.
Therefore,
\begin{align*}
\Pr\left(X \geq E + 10\right) &\leq \Pr\left(X \geq E + 10\; \bigg\rvert \; X \geq E-\lambda^{\left(\lambda^\star\left(E\right)\right)}(E)\right) \\
&= \Pr\left(Y_{\lambda^\star\left(E\right)} \geq \lambda^{\left(\lambda^\star\left(E\right)\right)}(E) + 10\; \bigg\rvert \; Y_{\lambda^\star\left(E\right)} \geq 0\right) \\
&\leq \Pr\left(Y_{\lambda^\star\left(E\right)} \geq 14\; \bigg\rvert \; Y_{\lambda^\star\left(E\right)} \geq 0\right) < \frac{7}{14} = \frac{1}{2}.
\end{align*}
\end{proof}

Consider the following algorithm.
\begin{algorithm}[H]
\caption{}\label{algspecificE}
\begin{algorithmic}[1]
\Require{$\mathcal{A}$, $E$.}
\Repeat
\For{$k=1$ to $\lambda^\star(E)$}
\For{$2\lceil \left(\lambda^{(k-1)}\left(E\right)+2\right)^2+1 \rceil$ \text{\bf times}}
\State Run $\mathcal{A}$ for~$2e^{E - \lambda^{(k)}\left(E\right)}$ computational steps.
\EndFor
\EndFor
\For{$2$ \text{\bf times}}
\State Run $\mathcal{A}$ for~$2e^{E + 10}$ computational steps.
\EndFor
\Until{$\mathcal{A}$ completed a run.}
\end{algorithmic}
\end{algorithm}

\begin{corollary}[of Lemma~\ref{corethresholds}]\label{coresuccessprob}
Each \textbf{repeat} loop of Algorithm~\ref{algspecificE} fully executes~$\mathcal{A}$ with probability at least~$\frac{3}{4}$.
\end{corollary}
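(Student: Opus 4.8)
The plan is to fix one pass through the body of the \textbf{repeat} loop of Algorithm~\ref{algspecificE} and bound the probability that this pass contains no completed run of~$\mathcal{A}$, case-splitting on the dichotomy of Lemma~\ref{corethresholds}. The only probabilistic input I would need is Markov's inequality in the following packaged form: if~$\mathcal{E}$ is an event with~$\Pr(\mathcal{E})=p$ on which the expected running time of~$\mathcal{A}$ is at most~$\mu$, then one truncated evaluation of~$\mathcal{A}$ for~$2\mu$ steps completes a full run with probability at least~$p/2$ (conditioned on~$\mathcal{E}$, Markov bounds the probability of exceeding~$2\mu$ by~$1/2$). Since distinct truncated evaluations inside a pass use independent executions of~$\mathcal{A}$, a block of~$m$ identical ones then fails to contain a completed run with probability at most~$(1-p/2)^m$.

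In the first case of Lemma~\ref{corethresholds} there is a~$k$ with~$\Pr(X<E-\lambda^{(k)}(E))\ge q_k:=((\lambda^{(k-1)}(E)+2)^2+1)^{-1}$, and~$\lambda^{(k-1)}(E)>5$ for every~$1\le k\le\lambda^\star(E)$ by Claim~\ref{lambdabasic}, so~$q_k\le 1/50$. I would apply the observation above with~$\mathcal{E}=\{X<E-\lambda^{(k)}(E)\}$ and~$\mu=e^{E-\lambda^{(k)}(E)}$: each of the~$\Theta(1/q_k)$ truncated evaluations in the~$k$-th inner \textbf{for} loop (each running~$\mathcal{A}$ for~$2e^{E-\lambda^{(k)}(E)}$ steps) completes a run with probability at least~$q_k/2$, and the repetition count of that loop is exactly what makes~$(1-q_k/2)^{(\text{count})}\le 1/4$. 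Since this block lies inside the pass, the pass fails with probability at most~$1/4$.

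In the second case,~$\Pr(X<E+10)\ge 1/2$, and I would rely on the final \textbf{for} loop, whose evaluations run~$\mathcal{A}$ for~$2e^{E+10}$ steps; conditioned on~$\{X<E+10\}$ each completes a run with probability at least~$1/2$. The step I expect to be the main obstacle is extracting the constant~$3/4$ here: the bare inequality~$\Pr(X<E+10)\ge 1/2$, even combined with arbitrarily many final evaluations, only yields success probability approaching~$1/2$. To close this gap one must use not the statement but the \emph{proof} of Lemma~\ref{corethresholds} in this case, which also gives~$\Pr(X\ge E-\lambda^{(\lambda^\star(E))}(E))>1-1/50$ together with~$\E[X\mid X\ge E-\lambda^{(\lambda^\star(E))}(E)]\le E-\lambda^{(\lambda^\star(E))}(E)+7$; Markov applied to this conditional expectation bounds by less than~$1/4$ the probability that~$X$ sits more than a fixed constant above~$E$, the inner loops already swallow the mass of~$X$ comfortably below~$E$ up to an error exponentially small in~$\lambda^{(\lambda^\star(E))}(E)\ge 4$, and the final block handles the remaining middle band with failure at most~$1/4$. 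Combining the "low'', "middle'', and "high'' contributions, a single pass completes a run of~$\mathcal{A}$ with probability at least~$3/4$; checking the exact constants (the~$10$, the~$+2$'s, and the precise repetition counts) is then routine bookkeeping.
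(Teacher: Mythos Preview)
Your high-level plan—case split on Lemma~\ref{corethresholds}, bound each truncated evaluation by Markov, and use independence across evaluations—is exactly what the paper intends (the corollary is stated without proof). But two quantitative points go wrong.

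\emph{Case 1.} With $\Pr(X<E-\lambda^{(k)}(E))\ge q_k$, a single evaluation in the $k$-th block succeeds with probability at least $q_k/2$, and the block has $2\lceil 1/q_k\rceil$ repetitions. This yields block-failure at most $(1-q_k/2)^{2/q_k}$, which is essentially $e^{-1}\approx 0.37$, not $1/4$. Your claim that ``the repetition count is exactly what makes $(1-q_k/2)^{(\text{count})}\le 1/4$'' is off by a constant factor in the exponent.

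\emph{Case 2.} You correctly set up independent evaluations with fresh $X$, but then assert that with $\Pr(X<E+10)\ge 1/2$ ``arbitrarily many final evaluations only yield success approaching $1/2$''. That would be true only if $X$ were drawn once per pass; under independence (your own premise), each final evaluation succeeds with probability $\ge 1/4$, so $m$ of them fail with probability $\le (3/4)^m\to 0$. For the actual $m=2$ one gets failure $\le 9/16$. Your detour into the \emph{proof} of Lemma~\ref{corethresholds} is thus based on a misreading of the model and is not needed.

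In fact, neither branch gives failure $\le 1/4$ with the repetition counts written in Algorithm~\ref{algspecificE}: the straightforward argument yields only $\max(e^{-1},\,9/16)$. The constant $3/4$ in the corollary appears to be a slip; to obtain it (or any constant strictly larger than $1-1/e$, which is what the proof of the final theorem actually requires so that $\sum_t e^t\cdot(\text{failure})^t$ converges), one should bump the multipliers in Lines~3 and~6 from $2$ to, say, $3$ and $6$. This changes nothing but constants in Lemma~\ref{iterationruntime}.
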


\begin{lemma}\label{iterationruntime}
Let~$E\geq \max\left(\E[X],\; 5\right)$, Algorithm~\ref{algspecificE} runs in~$O\left(e^E\right)$ expected time.
\end{lemma}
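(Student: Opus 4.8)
The plan is to decouple the two sources of cost in Algorithm~\ref{algspecificE}: the \emph{number} of passes through the outer \textbf{repeat} loop, which is random, and the cost of a \emph{single} pass, which is deterministic because every truncated evaluation is explicitly capped.

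First I would bound the number of passes. Within one pass of the \textbf{repeat} loop, each truncated evaluation of~$\mathcal{A}$ is a fresh run using independent random bits, so distinct passes are mutually independent. By Corollary~\ref{coresuccessprob}, a single pass completes a full execution of~$\mathcal{A}$ with probability at least~$\tfrac34$. Hence if~$N$ denotes the number of passes, then~$\Pr(N\geq i)\leq (1/4)^{i-1}$, so~$\E[N]\leq\sum_{i\geq 1}(1/4)^{i-1}=\tfrac43=O(1)$ (and in particular~$N<\infty$ almost surely, so the algorithm does fully execute~$\mathcal{A}$).

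Next I would bound the number of computational steps in a single pass. The final two-run loop contributes at most~$2\cdot 2e^{E+10}=4e^{10}\cdot e^{E}=O(e^{E})$. The double loop over~$k$ contributes at most
$$
\sum_{k=1}^{\lambda^\star(E)} 2\lceil (\lambda^{(k-1)}(E)+2)^2 + 1\rceil \cdot 2e^{E - \lambda^{(k)}(E)}
= 4 e^E \sum_{k=1}^{\lambda^\star(E)} \lceil (\lambda^{(k-1)}(E)+2)^2 + 1\rceil \, e^{-\lambda^{(k)}(E)} .
$$
The crux is the estimate of this last sum, and it is where the precise choice~$\lambda(x)=3\ln x$ pays off: since~$\lambda^{(k)}(E)=3\ln\lambda^{(k-1)}(E)$ we have~$e^{-\lambda^{(k)}(E)}=(\lambda^{(k-1)}(E))^{-3}$, so the cubic decay exactly dominates the quadratic repetition count. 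For~$k\leq\lambda^\star(E)$ we have~$k-1<\lambda^\star(E)$ and hence~$a:=\lambda^{(k-1)}(E)>5$, and a routine estimate gives~$\lceil (a+2)^2+1\rceil\leq 3a^2$ for~$a\geq 5$, so each summand is at most~$3a^{-1}=3(\lambda^{(k-1)}(E))^{-1}$. Therefore the sum is at most~$3\sum_{j=0}^{\lambda^\star(E)}(\lambda^{(j)}(E))^{-1}<6$ by Claim~\ref{lambdabasic}(3), and the double loop costs~$O(e^{E})$ steps.

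Combining, a single pass of the \textbf{repeat} loop takes at most some deterministic~$C=O(e^{E})$ steps, so the total running time is at most~$N\cdot C$, whose expectation is~$\E[N]\cdot C=O(e^{E})$. I do not expect any genuine obstacle here: the only real content is recognizing that the quadratic repetition counts~$(\lambda^{(k-1)}(E)+2)^2$ are neutralized by the cubing in the exponent — which is exactly why~$\lambda$ carries the factor~$3$ — after which Claim~\ref{lambdabasic}(3) turns the geometric-looking tail~$\sum_j(\lambda^{(j)}(E))^{-1}$ into an absolute constant, and everything else is bookkeeping.
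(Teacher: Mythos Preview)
Your proposal is correct and follows essentially the same route as the paper: bound the expected number of \textbf{repeat} iterations via Corollary~\ref{coresuccessprob}, then bound a single pass by splitting off the $O(e^E)$ cost of the last two evaluations and reducing the double loop, via the identity $e^{-\lambda^{(k)}(E)}=(\lambda^{(k-1)}(E))^{-3}$, to the sum $\sum_k (\lambda^{(k-1)}(E))^{-1}$ controlled by Claim~\ref{lambdabasic}(3). Your treatment is a bit more explicit about constants (the $\tfrac43$ for $\E[N]$ and the $\lceil(a+2)^2+1\rceil\le 3a^2$ step), but there is no substantive difference.
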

\begin{proof}
By Corollary~\ref{coresuccessprob} we enter the \textbf{repeat} loop a constant number of times in expectation. We thus analyze the computational cost of a single \textbf{repeat} loop.
The evaluations in Lines~$5-6$ take~$O\left(e^E\right)$ time. 
The evaluations in Lines~$2-4$ take
$$
\sum_{k=1}^{\lambda^\star (E)} 2\lceil \left(\lambda^{(k-1)}\left(E\right)+2\right)^2+1 \rceil \cdot 2e^{E - \lambda^{(k)}\left(E\right)}
=
O\left(e^E \cdot \sum_{k=1}^{\lambda^\star (E)} \left(\lambda^{(k-1)}\left(E\right)\right)^2 e^{- \lambda^{(k)}\left(E\right)}\right)
$$
time.
By the definition of~$\lambda(x)$, we have~$e^{-\lambda^{(k)}(x)}=e^{-3\ln\left(\lambda^{(k-1)}\left(x\right)\right)}=\left(\lambda^{(k-1)}\left(x\right)\right)^{-3}$.
In particular,
$$
\sum_{k=1}^{\lambda^\star (E)} \left(\lambda^{(k-1)}\left(E\right)\right)^2 e^{- \lambda^{(k)}\left(E\right)}
=
\sum_{k=1}^{\lambda^\star (E)} \left(\lambda^{(k-1)}\left(E\right)\right)^{-1}
< 2
,$$
where the last inequality follows from Claim~\ref{lambdabasic}.
\end{proof}

Finally, we also get rid of the necessity to provide the algorithm with~$E$ or~$\E[X]$.

\begin{algorithm}[H]
\caption{}\label{algfinal}
\begin{algorithmic}[1]
\Require{$\mathcal{A}$.}
\For{$E=5$ to $\infty$}
\For{$k=1$ to $\lambda^\star(E)$}
\For{$2\lceil \left(\lambda^{(k-1)}\left(E\right)+2\right)^2+1 \rceil$ \text{\bf times}}
\State Run $\mathcal{A}$ for~$2e^{E - \lambda^{(k)}\left(E\right)}$ computational steps.
\EndFor
\EndFor
\For{$2$ \text{\bf times}}
\State Run $\mathcal{A}$ for~$2e^{E + 10}$ computational steps.
\EndFor
\State\Return{\textbf{if} $\mathcal{A}$ completed a run.}
\EndFor
\end{algorithmic}
\end{algorithm}

\begin{theorem}
Algorithm~\ref{algfinal} runs in expected time~$O\left(e^{\E[X]}\right)$.
\end{theorem}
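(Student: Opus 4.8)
The plan is to compare Algorithm~\ref{algfinal} against Algorithm~\ref{algspecificE} run with the ``right'' parameter, exploiting that once $E$ is large enough, the $E$-th iteration of the outer \textbf{for} loop is literally one \textbf{repeat} loop of Algorithm~\ref{algspecificE} with parameter $E$. Set $E_0 := \max\left(\lceil \E[X]\rceil,\,5\right)$, an integer with $\max(\E[X],5)\le E_0\le \max(\E[X],5)+1$. First I would record a \emph{deterministic} bound on the cost of a single iteration of the outer loop: for a fixed $E$ that iteration performs only truncated evaluations whose thresholds sum to
$$
\sum_{k=1}^{\lambda^\star(E)} 2\lceil \left(\lambda^{(k-1)}\left(E\right)+2\right)^2+1 \rceil \cdot 2e^{E - \lambda^{(k)}\left(E\right)} \;+\; 2\cdot 2e^{E+10} \;=\; O\!\left(e^E\right),
$$
by exactly the computation in the proof of Lemma~\ref{iterationruntime} (using $e^{-\lambda^{(k)}(E)}=\left(\lambda^{(k-1)}(E)\right)^{-3}$ and Claim~\ref{lambdabasic}$(3)$). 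Since ``run $\mathcal{A}$ for $t$ steps'' costs at most $t$ steps, there is a universal constant $C$ so that iteration $E$ always takes at most $Ce^E$ steps.

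Next I would control the probability of reaching iteration $E$. For every integer $E\ge E_0$ we have $E\ge\max(\E[X],5)$, so Lemma~\ref{corethresholds} and hence Corollary~\ref{coresuccessprob} apply: iteration $E$ fully executes $\mathcal{A}$ with probability at least $\tfrac34$. Distinct iterations run $\mathcal{A}$ on disjoint, fresh random bits, so these events are mutually independent and independent of everything that happened in earlier iterations. Algorithm~\ref{algfinal} returns during iteration $E$ only if $\mathcal{A}$ was not fully executed in any of iterations $5,\dots,E-1$; for $E>E_0$ this forces iterations $E_0,\dots,E-1$ all to fail, an event of probability at most $\left(\tfrac14\right)^{E-E_0}$.

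Combining the two estimates (bounding the chance of reaching iteration $E$ by $1$ for $E\le E_0$ and by $\left(\tfrac14\right)^{E-E_0}$ for $E>E_0$), the expected running time of Algorithm~\ref{algfinal} is at most
$$
\sum_{E=5}^{E_0} Ce^E \;+\; \sum_{E=E_0+1}^{\infty} \left(\tfrac14\right)^{E-E_0} Ce^E \;=\; O\!\left(e^{E_0}\right) \;+\; Ce^{E_0}\sum_{j=1}^{\infty}\left(\tfrac{e}{4}\right)^{j} \;=\; O\!\left(e^{E_0}\right),
$$
both sums converging because $e<4$. Since $e^{E_0}\le e\cdot e^{\max(\E[X],5)}\le e^{6}e^{\E[X]}$ (the last step uses $\E[X]\ge 0$ to absorb the case $\E[X]<5$), the bound is $O\!\left(e^{\E[X]}\right)$; the same reasoning shows the outer loop a.s.\ reaches an iteration in which $\mathcal{A}$ is fully executed, so Algorithm~\ref{algfinal} indeed fully executes $\mathcal{A}$.

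The only place where constants matter — and hence the main point to get right — is the convergence of the tail sum: it requires the per-iteration failure probability to be strictly below $1/e$, i.e., the success probability to beat $1-1/e\approx0.632$. The value $\tfrac34$ furnished by Corollary~\ref{coresuccessprob} clears this with room to spare ($e/4<1$), which is precisely why Algorithm~\ref{algfinal} can afford to simply sweep $E$ upward rather than be told $\E[X]$. Everything else is bookkeeping already done in Lemmas~\ref{corethresholds}--\ref{iterationruntime} and Claim~\ref{lambdabasic}.
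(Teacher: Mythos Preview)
Your proof is correct and follows essentially the same approach as the paper: bound the deterministic cost of iteration $E$ by $Ce^E$ via Lemma~\ref{iterationruntime}, use Corollary~\ref{coresuccessprob} to bound the probability of reaching iteration $E>E_0$ by $(1/4)^{E-E_0}$, and sum the resulting geometric series with ratio $e/4<1$. You are somewhat more careful than the paper in making $E_0$ an integer, handling the case $\E[X]<5$, and spelling out the independence of distinct truncated evaluations and the role of the inequality $e<4$, but the argument is the same.
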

\begin{proof}
By Lemma~\ref{iterationruntime} the iteration of the outermost \textbf{for} loop corresponding to~$E$ takes at most~$C\cdot e^E$ time, for some global constant~$C$.
All iterations in which~$E<\E[X]$ thus take~$O\left(e^{\E[X]}\right)$ time.
By Corollary~\ref{coresuccessprob}, each subsequent iteration succeeds with probability at least~$\frac{3}{4}$.
Thus the expected running time is bounded by
$$
Ce^{\E[X]} \cdot \sum_{t=0}^{\infty} e^t \left(\frac{1}{4}\right)^{t}
= O\left(e^{\E[X]}\right)
.$$
\end{proof}

\section{Conclusions and Open Problems}\label{sec:conclude}
We showed that a Las-Vegas algorithm with expected running~$e^X$ conditioned on the value of some random variable~$X$, can always be converted to a Las-Vegas algorithm with expected running time~$O\left(e^{\E[X]}\right)$.
In particular, an algorithm whose running time is a random variable~$T$ can be converted to one with expected running time~$O\left(e^{\E[\ln T]}\right)$, which is never worse than~$O(\E[T])$.

We demonstrated a use of this theorem to simplify a proof in the regime of exponential time algorithms~\cite{zamir2021faster}.
It is interesting to try applying it to other exponential and non-exponential time algorithms and see if it can simplify or even improve the analysis.

\subsection{Considering the variance}
In terms of~$\E[X]$ only, we can not get any better than~$O\left(e^{\E[X]}\right)$ as the distribution of~$X$ might be constant.
In that case though, the variance of~$X$ is zero.
Can we get a better bound just by assuming that the variance of~$X$ is large?
Unfortunately, with the standard definition of variance this is not the case.
For any choice of~$E$ and~$V\geq 2E^2 e^{-E}$ consider the following distribution:
\[
\Pr\left(X=k\right) :=
\begin{cases} 
      e^{-E} & k=0 \\
      1-\frac{Ve^{-E}}{V-E^2 e^{-E}} & k=E \\
      \frac{\left(Ee^{-E}\right)^2}{V-E^2 e^{-E}} & k=\frac{V}{Ee^{-E}}
   \end{cases}.
\]
Its expectation is~$E$, its variance is~$V$, which can be arbitrarily large, and nevertheless~$\Pr(X<E)=e^{-E}$ so no strategy can beat~$O\left(e^E\right)$.

On the other hand, the wishful thinking above is true with some other notions of variance.
For example, if we consider mean deviation instead of standard deviation (i.e., $\E\left[\;|X-\E\left[X\right]|\;\right]$), then \emph{it is} true that if the variance is large then we can get a better running time.
It is intriguing to find useful notion of variance for which such a statement is true, with the goal of improving the running time of algorithms by analyzing the variance of~$X$.

In particular, consider the PPSZ algorithm for solving~$k$-SAT~\cite{paturi2005improved}~\cite{hertli20143}.
The algorithm uses randomization in two ways: first, a random permutation of the variables in the input formulas is drawn; Then, the chosen permutation determines the number of variables we need to \emph{guess} the value of.
In a recent improvement of the PPSZ analysis, Scheder~\cite{scheder2022ppsz} showed that in some large subset of permutations the number of guessed variables is smaller than what we expect when taking a uniformly random permutation.
In particular, this implies that there is some non-negligible variance in the original algorithm's running time.
Can we get better SAT algorithms by analyzing this variance?

\subsection*{Acknowledgements}
The author would like to thank Avi Wigderson for pointing out important references.

\bibliographystyle{alpha}
\bibliography{main.bib}

\end{document}